\newcommand{\cola}{\mathsf{a}}
\newcommand{\colb}{\mathsf{b}}
\newcommand{\colc}{\mathsf{c}}
\newcommand{\cold}{\mathsf{d}}
\newcommand{\col}{\chi}
\newcommand{\colt}{\tilde{\chi}}
\newcommand{\treecut}{\Psi}
\newcommand{\cut}{\kappa}
\newcommand{\bad}{\beta}
\newcommand{\totalbad}{\tau}
\newcommand{\distr}{\Delta}
\newcommand{\calA}{\mathcal{A}}
\newcommand{\powset}[1]{2^{#1}}
\newcommand{\NN}{\mathbb{N}}
\newcommand{\arrowset}{\mathcal{A}}
\newcommand{\play}{G}
\newcommand{\ourcomplex}{O(n \, \bad \, 2^{d+\bad} \, (d-1)^{d \bad / 2})}
\newcommand{\rk}{\operatorname{rk}}
\newcommand{\subrk}{\operatorname{subrk}}
\newcommand{\Bell}{\operatorname{Bell}}
\newcommand{\diam}{\operatorname{diam}}
\newcommand{\OPT}{\operatorname{OPT}}
\newcommand{\pplacer}{\textsf{pplacer}}
\newcommand{\rppr}{\textsf{rppr}}
\newcommand{\taxtastic}{\textsf{taxtastic}}
\newtheorem{lemma}{Lemma}
\newtheorem{cor}{Corollary}
\newtheorem{prop}{Proposition}
\newtheorem{theorem}{Theorem}
\newtheorem{problem}{Problem}
\newtheorem{defn}{Definition}
\newtheorem{alg}{Algorithm}
\newcommand{\eat}[1]{}
\newcommand{\forarxiv}[1]{#1}
\newcommand{\notforarxiv}[1]{}
\newcommand{\FIGTaxintro}{
\begin{figure}[ht]
\begin{center}
  \forarxiv{\includegraphics[width=6cm]{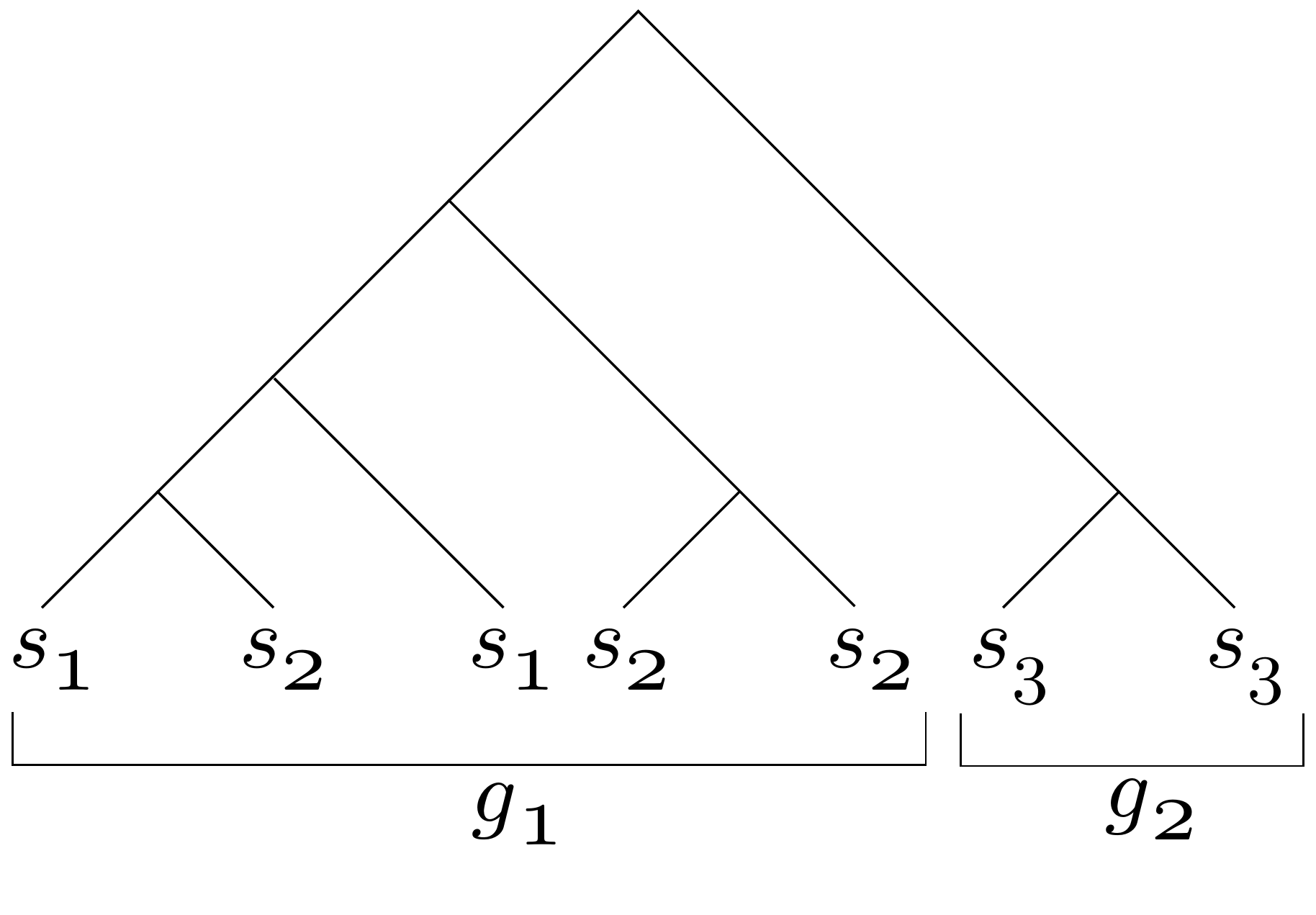}}
\end{center}
\caption{
A taxonomically labeled phylogenetic tree that is concordant with the genus level taxonomic assignments $g_i$ but not the species taxonomic assignments $s_i$.
}
\label{FIGTaxintro}
\end{figure}
}
\newcommand{\FIGAltConvex}{
\begin{figure}[ht]
\begin{center}
  \forarxiv{\includegraphics[height=3cm]{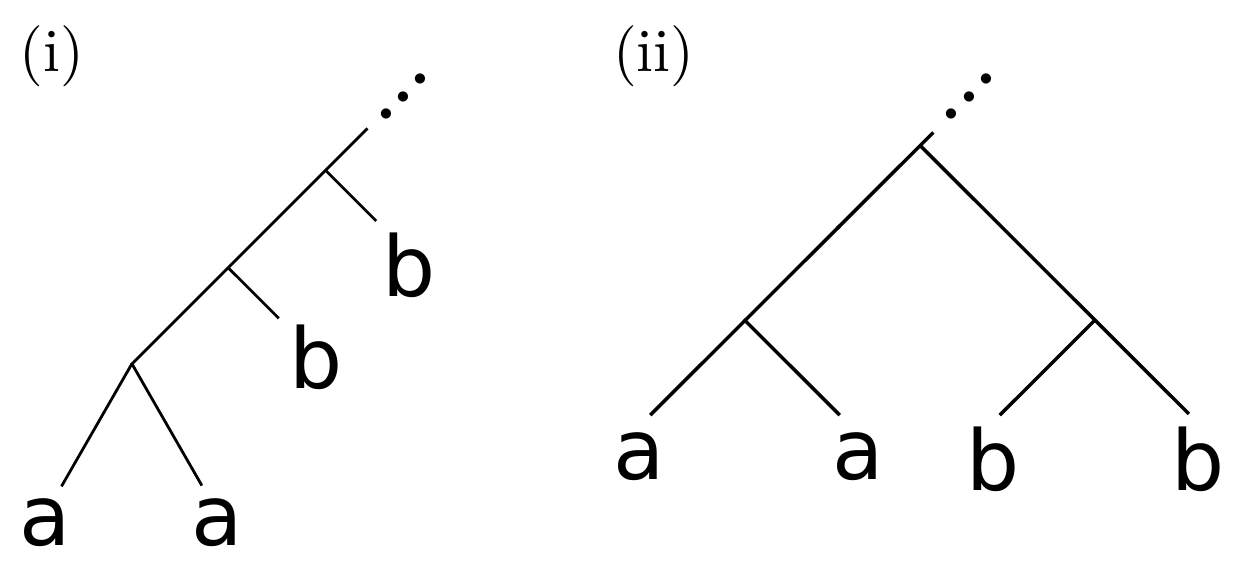}}
\end{center}
\caption{
Example colored trees showing the (i) original and (ii) strong definition of convexity, assuming $\cola$ and $\colb$ don't appear elsewhere in the tree.
In this figure, (i) and (ii) are convex according to the original definition, but only (ii) is convex according to the strong definition.
}
\label{FIGAltConvex}
\end{figure}
}
\newcommand{\FIGChallenge}{
\begin{figure}[ht]
\begin{center}
  \forarxiv{\includegraphics[height=5cm]{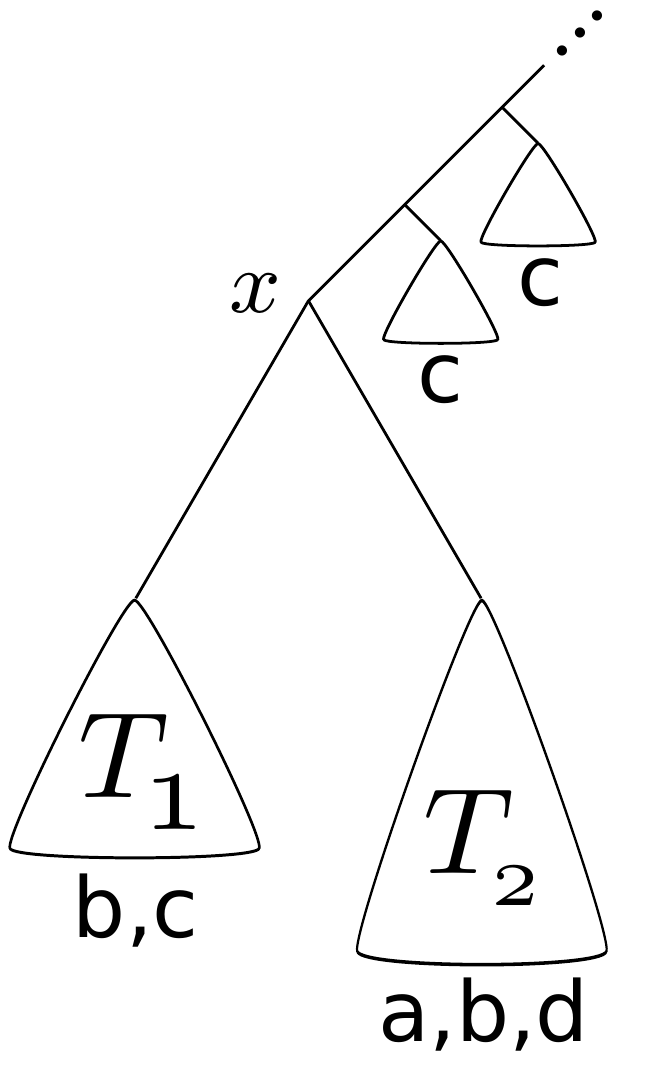}}
\end{center}
\caption{
A possible scenario encountered by the subcoloring recursion.
The letters $\cola,\colb,\colc$ and $\cold$ represent the presence of leaves with those taxonomic labels; asssume these taxonomic labels do not occur elsewhere in the tree.
The positions of colors $\colb$ and $\colc$ shows that this coloring is not convex, and a recursive subcoloring algorithm must decide at $x$ in which subtrees to allow the $\colb$ and $\colc$ colors.
}
\label{FIGChallenge}
\end{figure}
}
\newcommand{\FIGLocalBadness}{
\begin{figure}[ht]
\begin{center}
  \forarxiv{\includegraphics[width=10cm]{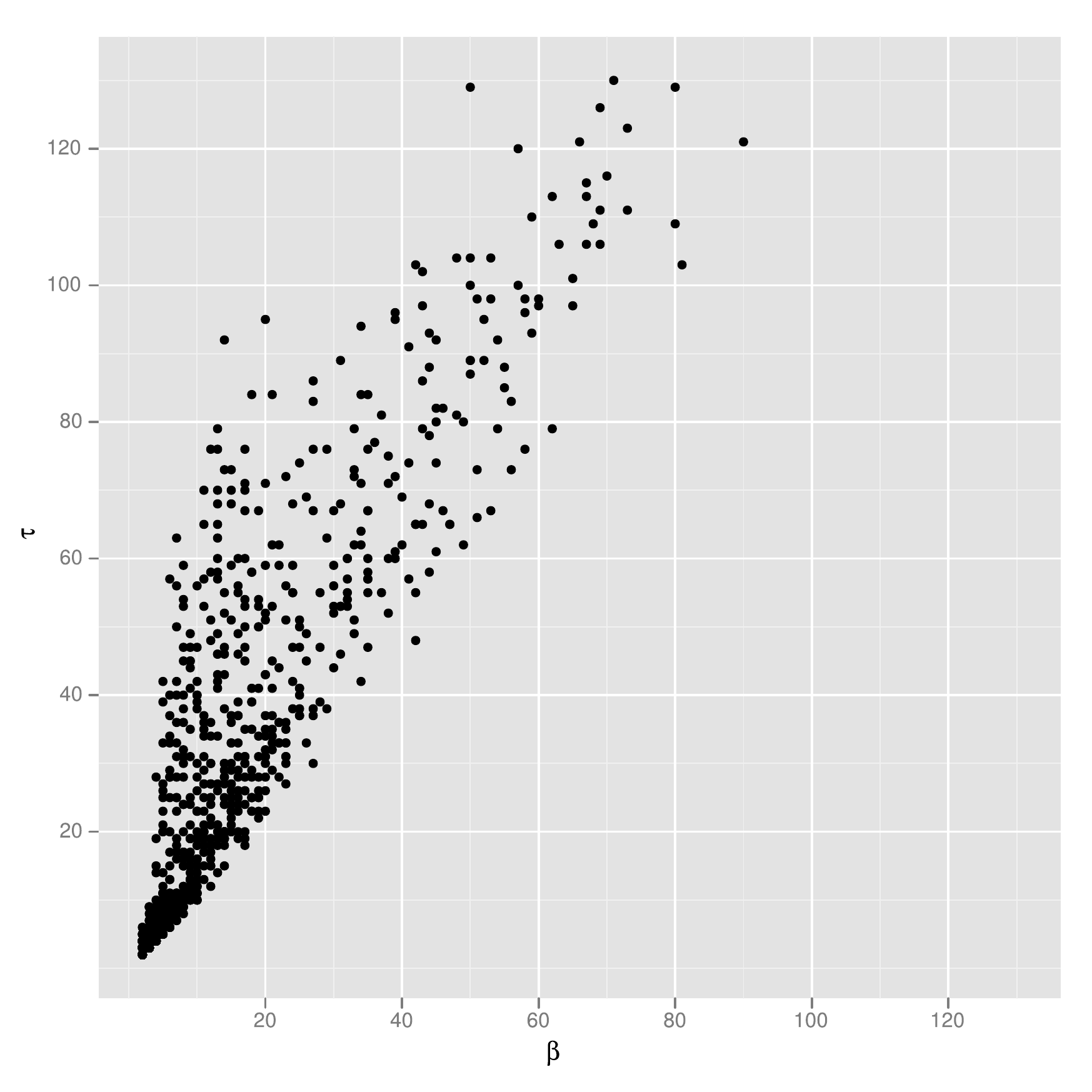}}
\end{center}
\caption{
The relationship between $\bad$, a local measure of nonconvexity, and $\totalbad$, a global measure, for our example data set.
Each point represents a single phylogenetic tree with taxonomic assignments at a given rank.
% Algorithms that are exponential in $\bad$ will be much faster for these example datasets than those exponential in $\totalbad$.
}
\label{FIGLocalBadness}
\end{figure}
}
\newcommand{\FIGImplicitConvex}{
\begin{figure}[ht]
\begin{center}
 \forarxiv{\includegraphics[width=6cm]{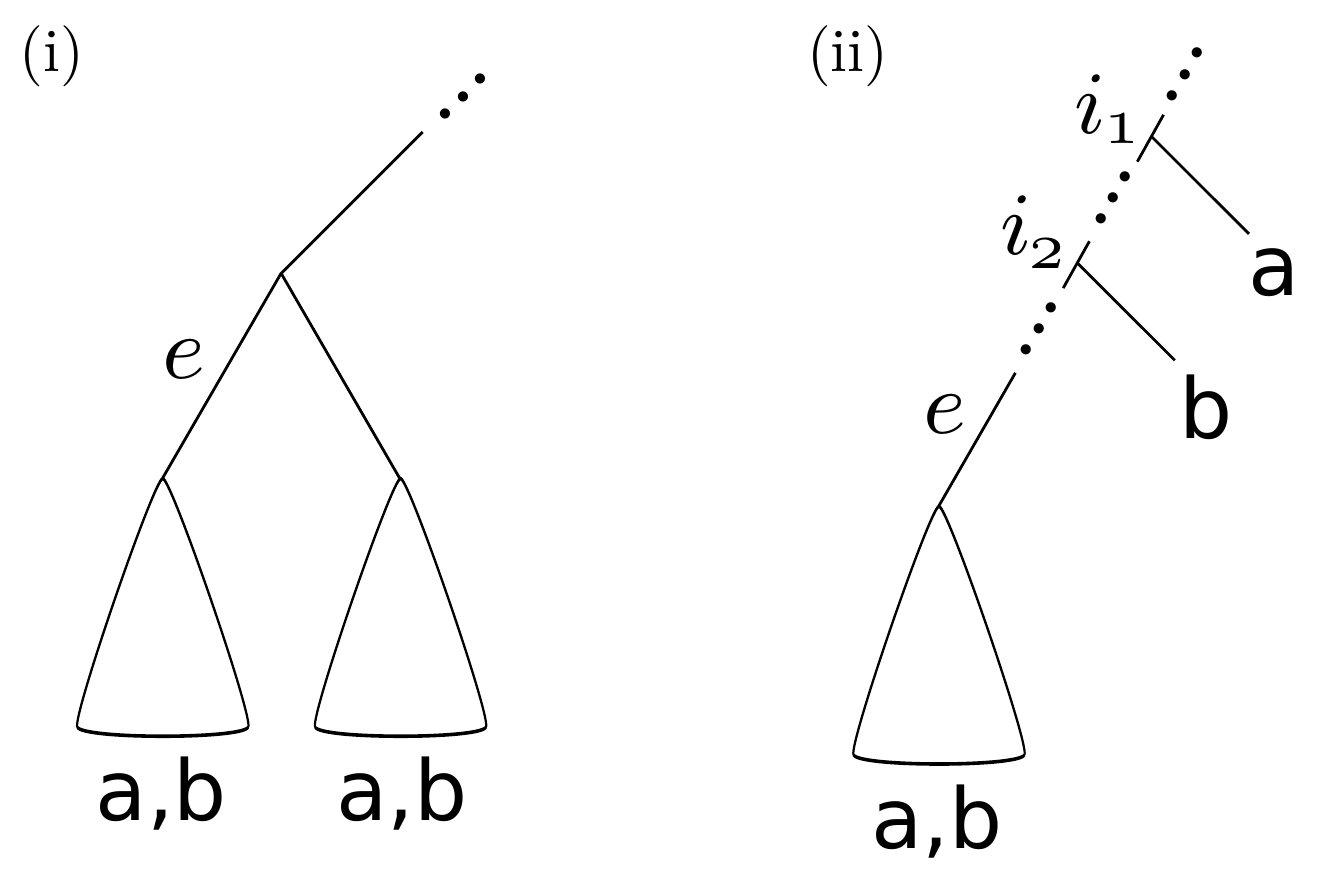}}
\end{center}
\caption{
Two potential settings for nonconvexity along an edge $e$ in the proof of Proposition~\ref{prop:implicitConvex}.
}
\label{FIGImplicitConvex}
\end{figure}
}
\newcommand{\FIGBBTiming}{
\begin{figure}[ht]
\begin{center}
 \forarxiv{\includegraphics[width=10cm]{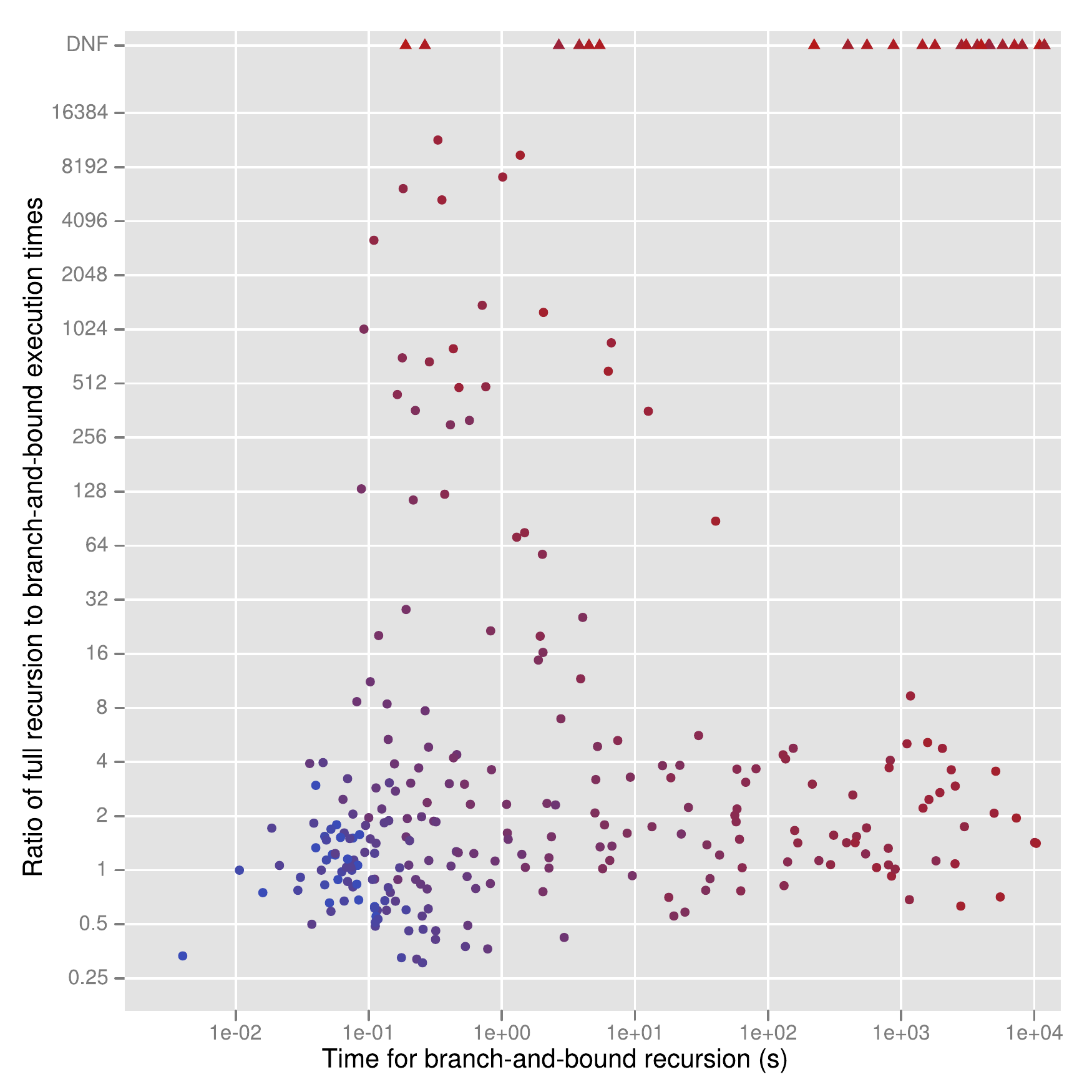}}
\end{center}
\caption{
Runtime comparison between the full recursion (Theorem~\ref{theorem:full}) to the branch and bound (Algorithm~\ref{alg:BB}).
``DNF'' means that the full recursion did not finish in the time and memory allotted.
Symbols colored according to their badness $\bad$, ranging from 4 (blue) to 14 (red).
}
\label{FIGBBTiming}
\end{figure}
}
\newcommand{\FIGBackarrow}{
\begin{figure}[ht]
\begin{center}
  \forarxiv{\includegraphics[width=6cm]{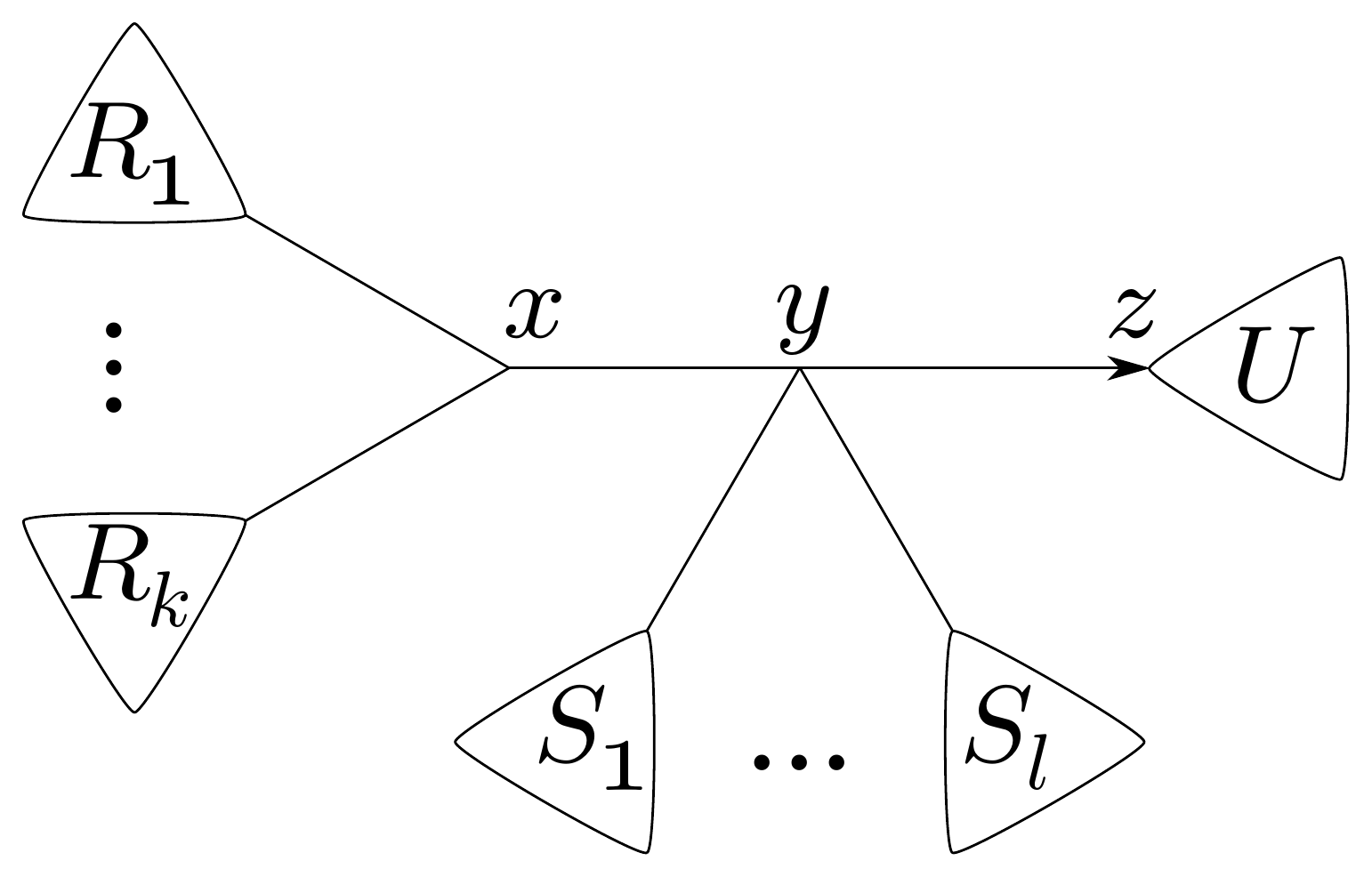}}
\end{center}
\caption{
Illustration of Lemma~\ref{lemma:backarrow}.
}
\label{FIGBackarrow}
\end{figure}
}
\begin{document}

\title[Reconciling taxonomy and phylogenetic inference]{Reconciling taxonomy and phylogenetic inference: formalism and algorithms for describing discord and inferring taxonomic roots}

\author{Frederick A Matsen}
\author{Aaron Gallagher}

\date{\today}

\begin{abstract}
Although taxonomy is often used informally to evaluate the results of phylogenetic inference and find the root of phylogenetic trees, algorithmic methods to do so are lacking.
In this paper we formalize these procedures and develop algorithms to solve the relevant problems.
In particular, we introduce a new algorithm that solves a ``subcoloring'' problem for expressing the difference between the taxonomy and phylogeny at a given rank.
This algorithm improves upon the current best algorithm in terms of asymptotic complexity for the parameter regime of interest; we also describe a branch-and-bound algorithm that saves orders of magnitude in computation on real data sets.
We also develop a formalism and an algorithm for rooting phylogenetic trees according to a taxonomy.
All of these algorithms are implemented in freely-available software.
\end{abstract}

\maketitle

\section{Introduction}

Since the beginnings of phylogenetics, researchers have used a combination of phylogenetic inference and taxonomic knowledge to understand evolutionary relationships.
Taxonomic classifications are often used to diagnose problems with phylogenetic inferences, and conversely, phylogeny is used to bring taxonomies up to date with recent phylogenetic inferences.
Similarly, biologists often evaluate a putative ``root'' of a phylogenetic tree by looking at the taxonomic classifications of the subtrees branching off that node.

This work is commonly done by hand.
That is, researchers who have knowledge of the taxa in their phylogenetic tree use their knowledge of the taxonomy to root the tree and describe the level of taxonomic concordance.
Despite the frequency with which these operations are done, we are not aware of any algorithms or software explicitly designed to address this problem.

In this paper we propose algorithms to express the discord between the taxonomy and the phylogeny and to root the tree taxonomically.
Our choice of algorithms will be guided by the parameter regime of relevance for modern molecular phylogenetics on marker genes: that of large bifurcating trees and a limited amount of discord with the taxonomy.

We state the agreement problem between the taxonomy and phylogeny in terms of an ``subcoloring'' problem previously described in the computer science literature \cite{MoranSnirConvexApprox07,MoranSnirConvexHard08}.
As described below, we make algorithmic improvements over previous work in the relevant parameter regime and present the first computer implementation.
For rooting, we show that the ``obvious'' definition has major defects when there is discordance between the phylogeny and the taxonomy at the highest multiply-occupied taxonomic rank.
We then present a more robust alternative definition and algorithms that can quickly find a taxonomically-defined root.

A related, but different, problem involves updating taxonomies based on phylogenetic inferences.
The most ambitious such project involves completely replacing the Linnean taxonomic system with a phylogenetic naming system, called PhyloCode \cite{forey2001phylocode}.
This proposal has met with substantial resistance from the community \cite{carpenter2003critique,nixon2003phylocode} and does not appear to have gained wide acceptance as of 2011.
Less ambitious but more highly accepted such projects include the Bergey \cite{kreig1984bergey} and Greengenes \cite{desantis2006greengenes} taxonomies; these have been curated to be more concordant with 16s phylogeny.
We do not approach the updating problem here, rather, we are interested in the commonly encountered problem of a researcher inferring a phylogenetic tree and wishing to understand the level of agreement of that tree with the taxonomy at various ranks and wishing to root the tree taxonomically.

\section{Expressing the differences between the taxonomy and the phylogeny}

\subsection{Informal introduction}

In this paper we will consider agreement with the taxonomy one taxonomic rank at a time, in order to separate out the different factors that can lead to discord between taxonomy and phylogeny.
These factors include phylogenetic methodology problems, horizontal gene transfer, lineage sorting, out of date taxonomic assignments, and mislabeling.
Various such factors lead to discordance at distinct ranks.
For example, we have observed rampant mislabeling at the species level in public databases, whereas higher-level assignments are typically more accurate.
Phylogenetic signal saturation or model mis-specification problems can lead to incorrect branching pattern near the root of the tree at the higher taxonomic levels, although the genus-level reconstructions can be correct.

An alternative to considering agreement one rank at a time would be to look for the largest set of taxa for which the induced taxonomy and phylogenetic tree agree on all levels.
Agreement between taxonomy and phylogeny at all taxonomic ranks simultaneously is equivalent to requiring complete agreement of the phylogeny and taxonomic tree.
Finding a subset of leaves on which two trees agree is known as Maximum Compatible Subtree (MCST), known to be polynomial for trees of bounded degree and NP-hard otherwise \cite{HeinEaComparingTrees96}.
Although such a solution is useful information, we have pursued a rank-by-rank approach here for the reasons described above.

\FIGTaxintro

We formalize the agreement of taxonomy with the phylogeny on a rank-by-rank basis in terms of \emph{convex colorings} \cite{MoranSnirConvexApprox07,MoranSnirConvexHard08}.
Informally, a convex coloring is an assignment of the leaves of a tree to elements of a set called ``colors'' such that the induced subtrees for each color are disjoint.
In this paper we will say that a phylogeny agrees with the taxonomic classification at rank $r$ if the taxonomic classifications at rank $r$ induce a convex coloring of the tree.
For example, in Figure~\ref{FIGTaxintro} the tree is not convex on the species level due to nonconvexity between species $s_1$ and $s_2$, although it is convex on the genus level, as the $g_1$ and $g_2$ genera fall into distinct clades.
In terms of the convex coloring definition, there is nontrivial overlap between the induced trees on $s_1$ and $s_2$.

We will express the level of discord between the taxonomy and the phylogeny at a rank $r$ in terms of the size of a \emph{maximal convex subcoloring}.
Given an arbitrary leaf coloring, a subcoloring is a coloring of some subset of the leaves $S$ of the tree that agrees with the original coloring on the set $S$.
The maximal convex subcoloring is the convex subcoloring of maximal cardinality.
For a tree that is taxonomically labeled at the tips, the discord at rank $r$ is defined to be the size of the maximal convex subcoloring when the leaves are colored according to the taxonomic classifications at rank $r$.

Our algorithmic contributions are twofold.
First, by developing an algorithm that only investigates removing colors when such a removal could make a difference, we show that the maximal convex subcoloring problem can be solved in a number of steps that scales in terms of a local measure of nonconvexity rather than the total number of nonconvex colors.
Second, we implement a branch and bound algorithm that terminates exploration early; this algorithm makes orders of magnitude improvement in run times for difficult problems.

\FIGAltConvex

Before proceeding on to outline how the algorithm works, note that the original definition of convexity is not the only way to formalize the agreement with a taxonomy at a given rank: for example, a stronger way of defining convexity is possible (Figure~\ref{FIGAltConvex}).
In this stronger version, colors must sit in disjoint rooted subtrees rather than in disjoint induced subtrees.
The algorithmic solution for this stronger version will be a special case of the previous one as described below.
It may be of more limited use for two reasons.
First, it depends on the position of the root: a tree that is strongly convex with one rooting may not be so in another.
Also, it is not uncommon for phylogenetic algorithms to return a tree like in Figure~\ref{FIGAltConvex}~(ii) although Figure~\ref{FIGAltConvex}~(i) may actually be correct tree; thus an algorithm that threw everything out except those that are convex in the strong sense might be overly strict.

\FIGChallenge

The purpose of this first half of the paper is to derive efficient algorithms for the convex subcoloring problem in the parameter regime of interest: a limited amount of discord for large trees.
The tree in Figure~\ref{FIGChallenge} serves to explain why the problem is combinatorially complex and motivates a recursive solution.
The idea of this solution is to recursively descend through subtrees, starting at the root.
Say this recursion has descended to an internal node $x$, and there are nodes of the color $\colc$ somewhere above $x$.
If the set of leaf colors in the subtree $T_1$ is $\{\colb,\colc\}$ and if the set of leaf colors in the subtree $T_2$ is $\{\cola,\colb,\cold\}$, then some removal of colors is needed due to nonconvexity between the $\colb$ and $\colc$ colors.
Assuming the leaf colors above $x$ are fixed, the choices are to uncolor the $\colc$ nodes in $T_1$ or the $\colb$ nodes in either $T_1$ or $T_2$.

One can think of ``allocating'' these colors to the subtrees: the possible choices are to allocate $\colc$ to $T_1$ but choose one of $T_1$ or $T_2$ to have $\colb$, or do disallow $\colc$ in $T_1$ and allow $\colb$ in both $T_1$ and $T_2$.
Here and in general, the crux of devising an efficient recursion is to efficiently decide which colors get allocated to which subtrees.
Convexity can be insured by explicitly choosing a color for each internal node, and making sure that the color allocations respect those choices in terms of convexity.

In fact, selecting these color allocations is the only problem, as a complete set of color allocations is trivially equivalent to a choice of coloring.
Indeed, given an optimal color allocation for each internal node, one can simply look at the allocations for the internal nodes just above leaves to decide whether those leaves get uncolored or not.
Conversely, given a leaf coloring, one can simply look at the color set of the descendants of that internal node to get the set of colors allocated to the subtrees.

In deciding the color allocations, we can restrict our serious attention to colors that are present on either side of an edge, such as $\colb$ and $\colc$ on either side of $T_1$'s root edge in Figure~\ref{FIGChallenge}.
We will say that these colors are \emph{cut} by the edge.
Colors that are not cut by an edge should not require any decision making when the recursive algorithm is visiting the node just above that edge.
However, doing the accounting is not completely straightforward: of the cut colors, one might only allocate $\colb$ to $T_2$, but $\cola$ and $\cold$ can be used as well.
Thus, allocations including some non cut colors must be considered, motivating the definition of the \emph{colors in play} (Definition~\ref{def:ColorsInPlay}).

Note that the ingredients of the decision being made in Figure~\ref{FIGChallenge} are the color specified by the coloring just above $x$ (in this case fixed to be $\colc$), and the colors available in the subtrees below $x$.
Given a set of colors to allocate to the leaves below $x$, the algorithm needs to decide how to allocate the possible colors to $T_1$ and $T_2$.
One way of doing that is to test every possible allocation using the previous results of the recursion and score them in terms of the size of the corresponding subcoloring.
Doing this with an awareness of the cut colors leads to an algorithm expressed in terms of the maximum number of colors cut by a given edge.

However, building such a comprehensive optimality map is not necessary.
By simply counting the number of leaves of each color below $x$, one can get upper bounds on the sizes of the corresponding subcolorings and only evaluate those that have the potential to be worth exploring.
This observation is the basis of the branch and bound algorithm (Algorithm~\ref{alg:BB}).

\subsection{Definitions and algorithms}

A \emph{rooted subtree} is a subtree that can be obtained from a rooted tree $T$ by removing an edge of $T$ and taking the component that does not contain the original root of $T$.
The \emph{proximal} direction in a rooted tree is towards the root, while the \emph{distal} direction is away from the root.
Given a tree $T$, let $N(T)$, $E(T)$, and $L(T)$ denote the nodes, edges, and leaves of $T$.
Given a set $U$, let $2^U$ denote the set of subsets of $U$.
If the tree is not rooted, root it arbitrarily.

Following the terminology of \cite{MoranSnirConvexApprox07, MoranSnirConvexHard08}, a \emph{color set} will be an arbitrary finite set.

\begin{defn}
  Let $T$ be a rooted tree, and let $F \subset L(T)$.
  A \emph{leaf coloring} is a map $\col: F \rightarrow C$.
\end{defn}

A color $c$ is \emph{cut} by an edge $e$ if there is at least one leaf of color $c$ on either side of $e$.
A \emph{multicoloring} is defined to be a map from the edges of the tree to subsets of the colors.

\begin{defn}
  Given a coloring $\col$ on a rooted tree $T$, an \emph{induced multicoloring} is the map $\colt: E(T) \rightarrow 2^C$
  such that $\colt(e)$ is the (possibly empty) set of colors cut by that edge.
\end{defn}

\begin{defn}
  Define the \emph{badness} $\bad(\col)$ of a coloring $\col$ to be $\max_{e \in E(T)} |\colt(e)|$.
  We say that a coloring is \emph{convex} if it has badness equal to zero or one.
\end{defn}

\begin{defn}
  The \emph{total number of bad colors} is $\totalbad(\col) = \left| \bigcup_{e \in \mathcal{E}} \colt(e) \right|$ where $\mathcal{E}$ is the set of edges where $|\colt(e)| \geq 2$.
\end{defn}

\begin{defn}
  A subcoloring of a leaf coloring $\chi: F \rightarrow C$ is a coloring $\chi': G \rightarrow C$ with $G \subset F$ such that $\chi'$ agrees with $\col$ on the domain of $\chi'$.
\end{defn}
These are partially ordered by inclusion of domains; the size of a subcoloring is defined to be the size of its domain.

\begin{problem}
\label{prob:subcolor}
  Given a leaf coloring $\col$ on a tree $T$, find a largest convex subcoloring.
\end{problem}

\subsubsection{Previous work and motivation for present algorithm}
The foundational work in this area was done by Moran and Snir \cite{MoranSnirConvexApprox07,MoranSnirConvexHard08}.
Their work is phrased in terms of ``convex recoloring,'' i.e.\ finding the minimal number of changes in a coloring in order to obtain one that is convex.

It suffices to consider subcolorings for the case of leaf colorings.
Indeed, any recoloring can be turned into a subcoloring by removing the color of all of the leaves that get recolored.
Conversely, any convex subcoloring can be turned into a convex recoloring in linear time \cite{bodlaender2007quadratic}.
For internal nodes, the color to be used for a given internal node is given by the definition of convex coloring.
For leaf nodes, simply take the color of the closest colored node.
In this equivalence, the number of leaves whose color is removed is equal to the number of leaves who get recolored; thus a minimal recoloring is equavalent to a maximal subcoloring.
We only consider subcolorings in this paper.

In \cite{MoranSnirConvexHard08}, Moran and Snir investigate both the general case of leaf colorings as well as the case of colorings including internal nodes.
They also consider non-uniform recoloring cost functions, where a ``cost'' is associated with recoloring individual nodes and the goal is to find a convex recoloring minimizing total cost.
In all settings, they demonstrate that the relevant recoloring problem is NP-hard.
They also demonstrate fixed parameter tractablity (FPT) of the problems, as described in the next paragraph.
In \cite{MoranSnirConvexApprox07} they present, among other results, a 3-approximation for tree recoloring.

The FPT bound for leaf coloring from \cite{MoranSnirConvexHard08}, $O(n^4 \, \totalbad \, \Bell(\totalbad))$, comes from an elegant argument using the Hungarian algorithm for maximum weight perfect matching on a bipartite graph.
In fact, an inspection of their proof reveals that their algorithm is $O(n \, d^3 \, \totalbad \, \Bell(\totalbad))$, where $d$ is the maximum degree of the tree.
$\Bell(k)$ denotes the $k$th Bell number, which is the number of unordered partitions of $k$ items; these numbers are known to satisfy the bounds $\left(\frac{k}{e \ln k}\right)^k < \Bell(k) < \left(\frac{k}{\ln k}\right)^k$ \cite{berend2010improved}.
Their recursion at a given internal node iterates over every unordered partition of the nonconvex colors, constructing a bipartite graph with edge weightings determined from the sizes of subcolorings of subtrees using those color sets for the set of excluded colors.
Applying the Hungarian algorithm to each such graph results in optimal solutions for every possible set of colors to exclude from the subtree at that internal node.
Because every unordered partition of the non-convex colors is considered, the algorithm is exponential in $\totalbad$.
For the case of general (i.e.\ not just leaf) colorings, Moran and Snir show that a dynamic program gives an $O(n \, \totalbad \, d^{\totalbad + 2})$ algorithm.
This of course also gives the same bound for leaf colorings.

The work of Moran and Snir was followed up by many authors.
For leaf-colored trees, Bachoore and Bodlaender \cite{bachoore2006convex} propose a collection of reductions to simplify the problem under investigation.
These reductions encode some of the logic of the algorithm presented here, such as that trees that have disjoint color sets can be solved independently.
They also use the fact that nonconvexity can be expressed in terms of the crossings of paths connecting leaves of the same color to show that the recoloring problem can be solved in $O(n 4^{\OPT})$ time, where $\OPT$ is the optimal number of uncolored leaves.
Note that this sort of bound is different than those described above, as $\OPT$ can get large even when the total number of bad colors is small.
The work for the general case culminated in the work of Ponta, H{\"u}ffner, and Niedermeier \cite{ponta2008speeding}, who use the childwise iterative approach to dynamic programing to construct an algorithm of complexity $O(n \, \totalbad \, 3^\totalbad)$.

\forarxiv{\FIGLocalBadness}

For trees built from real data, taxonomic identifiers are not randomly spread across the tree in a uniform fashion.
For example, species-level mislabeling will lead trees that are mostly convex with a couple of outliers, while a horizontal gene transfer will effectively ``transplant'' one clade into another.
In both of these situations there is a non-uniform distribution of taxonomic identifiers across the tree, and nonconvexity in these cases may be local.
Indeed, in Figure~\ref{FIGLocalBadness} we show the relationship between the badness $\bad$ and the total number of bad colors $\totalbad$ for our example trees, showing that the badness $\bad$ is significantly smaller than the total badness on a collection of phylogenetic trees for functional genes.
This motivates the search for a fixed parameter tractable algorithm that is exponential in $\bad$ rather than $\totalbad$.

Furthermore, phylogenetics is typically concerned with a setting of trees with small degree.
For example, many commonly used phylogenetic inference packages such as RAxML \cite{stamatakis2006raxml} and FastTree \cite{price2010fasttree} only return bifurcating trees; these sorts of programs are the intended source of trees for our convexify algorithm.
Even when multifurcations are allowed, the setting of interest is that of degree much smaller than $\bad$ or $\totalbad$, which has ramifications for algorithm choice as described below.

\subsubsection{Algorithm}

In this section we present our algorithm, which makes two improvements over previous work for the parameter regime of interest.
First, it only evaluates relevant colorings by restricting attention to cut colors, resulting in an algorithm that is exponential in $\bad$ rather than $\totalbad$.
% in contrast to \cite{MoranSnirConvexHard08}.
Still, such a complete recursion evaluates many sub-solutions that do not end up being used.
Because the problem is NP-hard, we cannot avoid some such evaluation, but we might hope to do better than evaluating everything.

This motivates the second aspect, a branch and bound strategy that can make orders of magnitude improvements in the run time of the algorithm.
In order to make the branch and bound algorithm possible, the algorithm enumerates all legal color allocations first, and ranks them according to the upper bound function.
By bounding the size of a solution for a given color allocation, we can avoid fully evaluating the sub-solution for that almost partition.
A simple way of bounding the size of a solution for an color allocation is the maximal size of the solution when convexity is ignored.

\begin{defn}
  Given a rooted subtree $T'$ of $T$, define $\cut(T')$ to be the colors of $\col$ cut by the root edge of $T'$ as it sits inside $T$.
\end{defn}

Assume that $T$ has been embedded in the plane, and that every internal node has been uniquely labeled.
For every such label $i$ let $t(i)$ be the ordered tuple of labels below $i$ in the tree, let $T(i)$ be the subtree below $i$, and use $\cut(i)$ as shorthand for $\cut(T(i))$.
Vector subscript notation will be used to index both $t(i)$ and the color set $k$-tuples defined next; i.e.\ $t(i)_j$ is the $j$th entry of $t(i)$.

\begin{defn}
 A \emph{color set $k$-tuple} is an ordered $k$-tuple of subsets of $C$.
 They will be denoted with $\pi$.
\end{defn}

These color set $k$-tuples will represent the allocation of colors to subtrees.
We will ensure convexity of these color allocations using the following two definitions.

\begin{defn}
Given a color set $k$-tuple $\pi$,
\[
A(\pi) = \bigcup_i \pi_i
\]
and
\[
B(\pi) = \bigcup_{i < j} (\pi_i \cap \pi_j)
\]
\end{defn}

\begin{defn}
  An \emph{almost partition} of $Y \subset C$ is an ordered 2-tuple $(b, \pi)$ where $b \in C$ and $\pi$ is a color set $k$-tuple such that $A(\pi) = Y$ and $B(\pi) \subset \{b\}$.
\end{defn}

These will be the color allocations at a given internal node $x$ with color $b$; this definition guarantees convexity locally.
As described in the Introduction, we would like to restrict our attention to cut colors, but this requires some attention as all of the colors used are not cut.
This motivates the following definition, which describes how the colors that are not explicitly excluded as the complement of $X$ are the ones available in the subtrees below $i$.

\begin{defn}
  \label{def:ColorsInPlay}
  Given $i$ an internal node index and $X \subset \cut(i)$ define
  \[
  \play(i, X) = X \cup \left( \bigcup_{j \in t(i)} \cut(j) \setminus (\cut(i) \setminus X) \right).
  \]
  $\play(i, X)$ will be called the \emph{colors in play} for $(i, X)$.
\end{defn}

\begin{defn}
  \label{def:ColorAllocation}
  Assume we are given an internal node $i$, $X \subset \cut(i)$, and $c \in C$.
  A \emph{legal color allocation} for $(i, c, X)$ is an almost partition $(b, \pi)$ of $\play(i,X)$ such that
  \begin{enumerate}
    \item $\pi_j \subset \cut(t(i)_j)$
    \item if $c \in X$ then $b=c$.
  \end{enumerate}
  Denote the set of such legal color allocations with $\distr(i,c,X)$, and let $\distr(i) = \bigcup_{c,X} \distr(i,c,X)$.
\end{defn}

These color allocations are exactly the set of choices that are allowed when developing a subsolution for a cut set $X$ such that the color $c$ is just above $X$.
The first condition ensures that the color choice sits inside the correct set of cut colors.
The second condition says that an internal node must take on any color found above and below it.

\begin{defn}
  \label{def:ImplicitSubcoloring}
  An implicit subcoloring for $T'$ is a choice of $(b(i), \pi(i)) \in \Delta(i)$ for every $i \in N(T')$ satisfying the following compatibility property for every $k \in t(i)$:
  \[
    (b(k), \pi(k)) \in \distr(k, b(i), \pi(i)_k).
  \]
\end{defn}
That is, the color allocation for every node descending from $i$ is a legal color allocation given the choices of $b(i)$ and $\pi(i)$ made at $i$.

As described in the Introduction, an implicit subcoloring defines an actual subcoloring via the implicit subcoloring just proximal to leaf nodes.
Indeed, say $t(i)_j$ is a leaf, and that $(b(i), \pi(i))$ is the color allocation for internal node $i$.
Then $\pi(i)_j$ is empty or a single element by definition; the color for leaf $t(i)_j$ is used in the subcoloring if $\pi(i)_j \neq \emptyset$.
Furthermore, every convex subcoloring can be written in this form.

\begin{prop}
  \label{prop:implicitConvex}
  Implicit subcolorings are convex.
\end{prop}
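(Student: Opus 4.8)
The plan is to prove the equivalent statement that the subcoloring $\subcol$ determined by an implicit subcoloring has badness at most one, i.e.\ that every edge of $T'$ cuts at most one color of $\subcol$, which by definition is convexity. Writing $e_i$ for the edge immediately proximal to a node $i$, and writing $p$ for the parent of $i$ so that $X_i := \pi(p)_i$ is the block the parent allocates to the subtree $T(i)$, it suffices to prove the sharp statement that the set of colors cut by $e_i$ under $\subcol$ is contained in $\{b(i)\}$. Since every edge of $T'$ is $e_i$ for exactly one node $i$, this yields $\bad(\subcol) \le 1$.

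First I would record a bookkeeping identity that lets one pass from ``colors genuinely appearing in a subtree'' back to the allocated blocks $\pi$: for every internal node $i$ one has $\play(i,X_i) \cap \cut(i) = X_i$, because forming $\play(i,X_i)$ adjoins $X_i \subseteq \cut(i)$ and deletes all of $\cut(i)\setminus X_i$, leaving exactly $X_i$ inside $\cut(i)$. Using this together with $A(\pi(i)) = \play(i,X_i)$ (the almost-partition condition at $i$), I would prove by induction from the leaves upward the \emph{containment lemma}: if a color $c$ lies on a $\subcol$-colored leaf of $T(i)$ and also on a $\subcol$-colored leaf outside $T(i)$, then $c \in X_i$. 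The leaf case is immediate from the rule that a leaf keeps its color exactly when its $\pi$-entry is the corresponding singleton. For the inductive step, $c$ descends into some child subtree $T(m)$ and still appears outside $T(m)$, so by induction $c \in \pi(i)_m \subseteq A(\pi(i)) = \play(i,X_i)$; moreover $c$ is genuinely cut by $e_i$ in $\col$ (it appears on either side), so $c \in \cut(i)$, and the identity forces $c \in X_i$.

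Then I would prove the sharp statement by induction from the root downward. Fix $i$ with parent $p$ and let $c$ be cut by $e_i$; the containment lemma gives $c \in X_i = \pi(p)_i$. Because $c$ appears outside $T(i)$ it appears either (a) in a sibling subtree $T(i')$, or (b) strictly above $p$, and this split is exhaustive. In case (a) the containment lemma applied at $i'$ gives $c \in \pi(p)_{i'}$, hence $c \in \pi(p)_i \cap \pi(p)_{i'} \subseteq B(\pi(p)) \subseteq \{b(p)\}$, so $c = b(p)$. In case (b) the color $c$ lies in $T(p)$ and above $p$, so it is cut by $e_p$, and the downward induction hypothesis gives $c = b(p)$. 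Either way $c = b(p)$, and since also $c \in X_i = \pi(p)_i$, the color $b(p)$ lies in $\pi(p)_i$; the second clause in the definition of a legal color allocation for $(i, b(p), X_i)$ then forces $b(i) = b(p) = c$. The root is the base case: there is no proximal edge and case (b) is vacuous.

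The main obstacle is exactly this final synchronization. A priori an edge could be cut both by the node's own color $b(p)$ arriving through a sibling (via $B(\pi)\subseteq\{b\}$) and by some inherited color coming from above, which would produce badness two; the content of the argument is that both routes are pinned to the single value $b(p)$, and that condition~(2) of Definition~\ref{def:ColorAllocation} then collapses this to $b(i)$. Setting up the downward induction so that the offending color is recognized as the proximal color $b(p)$ contained in $X_i$ is the crux; the remaining work is the routine ``sibling versus above'' case split and the boundary checks at leaves and at the root.
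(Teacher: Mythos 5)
Your proposal is correct, and it reaches the result by a genuinely different route than the paper. The paper argues by contradiction: it takes an edge cut by two colors $\cola,\colb$, reduces without loss of generality to the two configurations of Figure~\ref{FIGImplicitConvex}, and derives a contradiction either from $|B(\pi(i))|\geq 2$ when both colors converge at a single node, or, when they converge at distinct nodes $i_1$ and $i_2$, by propagating $b(i)=\cola$ along the path between them via condition (2) of Definition~\ref{def:ColorAllocation} until it collides with $\colb\in B(\pi(i_2))$. Your cases (a) and (b) correspond exactly to those two configurations, but you replace the contradiction-plus-figure argument with a direct downward induction establishing the sharper invariant that the $\subcol$-cut set of the edge above $i$ lies in $\{b(i)\}$, which immediately gives $\bad(\subcol)\leq 1$. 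The more substantive difference is your containment lemma: the identity $\play(i,X)\cap\cut(i)=X$ together with the leaf-to-root induction proves that the allocations $\pi$ actually track where retained colors occur --- precisely the step the paper's proof takes for granted when it asserts, from the positions of the colors in the figure, facts such as $\cola\in B(\pi(i_1))$. Your route buys rigor and independence from the figure (including an explicit treatment of the root and of the exhaustiveness of the sibling-versus-above split); the paper's buys brevity. The only wrinkle, inherited from the paper's formalism rather than introduced by you, is that $b(i)$ and $\distr(i,c,X)$ are only loosely specified at leaves (the paper's Theorem~\ref{theorem:full} handles leaves by the special rule $\varphi_i(c,X)=X$); this is harmless here since a leaf edge cuts at most one color under any coloring, so your invariant holds trivially there, consistent with your remark about boundary checks.
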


\begin{proof}
  Assume an implicit subcoloring $\{(b(j), \pi(j))\}_{j \in N(T)}$.
  Let $\col$ be a coloring coming from an implicit subcoloring.
  If $\chi$ is not convex, then there is an edge $e$ such that $|\colt(e)| \geq 2$.
  Say $\{\cola,\colb\} \subset \colt(e)$.
  Without loss of generality, the colors will be positioned as in one of the two cases depicted in Figure~\ref{FIGImplicitConvex}.
  In case (i), $|B(\pi(i))| \geq 2$, contradicting the definition of an almost partition.
  In case (ii), $b(i_1)$ is $\cola$ by the definition of almost partition because $\cola \in B(\pi(i_1))$.
  Then $b(i) = \cola$ for every $i$ between $i_1$ and $i_2$, inclusive, by part 2 of Definition~\ref{def:ColorAllocation} and Definition~\label{def:ImplicitSubcoloring}.
  However, $\colb \in B(\pi(i_2))$, contradicting the definition of almost partition.
\end{proof}

\FIGImplicitConvex

With this in mind, we can now speak of the size of an implicit subcoloring as the size of its associated convex subcoloring.
The goal, then, is to find the largest implicit subcoloring.

\begin{defn}
  Given internal node $i$ and $(b, \pi) \in \distr(i)$, a \emph{partial solution} for $(b, \pi)$ is an implicit subcoloring for $T(i)$ of maximal size such that the choice of almost partition for node $i$ is $(b, \pi)$.
\end{defn}

% \begin{prop}
%   Assume an internal node $i$ and $(b, \pi) \in \distr(i)$, and a \emph{partial solution} $\varphi := \{(b(j), \pi(j))\}_{j \in N(T(i))}$.
%   For any $k \in N(T(i))$, $\varphi$ restricted to $T(k)$ is a partial solution for node $k$ and $(b(k), \pi(k))$.
% \end{prop}
%
% \begin{proof}
%   If not, then there must exist an implicit subcoloring that is strictly larger than the restriction.
%   We could exchange $\varphi|_{T(k)}$ with this implicit subcoloring to get a partial solution that is strictly bigger than $\varphi$.
% \end{proof}
%
% This proposition implies a clear strategy: build up an implicit solution by finding partial solutions for subtrees.

\begin{theorem}
  \label{theorem:full}
  There is an $\ourcomplex$ complexity algorithm to solve the subcoloring problem for leaf labeled trees.
\end{theorem}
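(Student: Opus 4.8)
The plan is to realize the recursion sketched in the Introduction as a bottom-up dynamic program over implicit subcolorings, to argue that its optimum is a largest convex subcoloring, and then to bound its cost by counting, at each node, the number of legal color allocations it must consider. Correctness has two halves. By Proposition~\ref{prop:implicitConvex} every implicit subcoloring is convex, so it suffices to show conversely that every convex subcoloring is realized by some implicit subcoloring; this is done by reading off, at each internal node $i$, the color covering $i$ as $b(i)$ and the colors cut by each child edge as the entries of $\pi(i)$, and checking that the resulting family of pairs $(b(i),\pi(i))$ consists of almost partitions satisfying the compatibility property of Definition~\ref{def:ImplicitSubcoloring} (this is the content of the discussion following that definition). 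Hence maximizing the size over implicit subcolorings solves Problem~\ref{prob:subcolor}.

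First I would exploit that the compatibility property couples a node $i$ only to its immediate children $k \in t(i)$, via $(b(k),\pi(k)) \in \distr(k, b(i), \pi(i)_k)$. Once $(b(i),\pi(i))$ is fixed the subproblems on the children become independent, so a partial solution for $(b,\pi)$ at $i$ is obtained by choosing, for each child, a maximum-size compatible partial solution and adding the contributions of the colored leaves at the base of the recursion. Organizing this as one bottom-up pass, at each node $i$ I tabulate the partial-solution sizes over $(b,\pi) \in \distr(i)$ together with, for each pair $(c,X)$, the best value attained over $\distr(i,c,X)$. With this table the work done at the parent is a bounded lookup per child, so up to the polynomial bookkeeping of manipulating cut sets the running time is governed by $\sum_i |\distr(i)|$.

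The heart of the argument is therefore to bound $|\distr(i)|$. A legal color allocation is determined by a choice of $X \subset \cut(i)$, contributing at most $2^{|\cut(i)|} \le 2^{\bad}$ possibilities since $|\cut(i)| \le \bad$; a choice of the distinguished color $b$, which need only range over the cut colors at $i$ and the ``uncolored'' option, giving $O(\bad)$ relevant values; the subset of children into which $b$ is placed, at most $2^{d}$ possibilities; and the allocation of the remaining colors. By condition~1 of Definition~\ref{def:ColorAllocation} together with $B(\pi) \subset \{b\}$, every color of $\play(i,X)$ other than $b$ must be sent to exactly one child whose cut set contains it, so the only genuine decisions come from colors cut by two or more children. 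Writing $S$ for this set of multiply-cut colors and $\ell_c \ge 2$ for the number of children cutting $c$, the number of such allocations is at most $\prod_{c \in S} \ell_c \le (d-1)^{|S|}$. The key double count is that each of the at most $d-1$ child edges cuts at most $\bad$ colors, so
\[
\sum_{c \in S} \ell_c \;\le\; \sum_{j} |\cut(t(i)_j)| \;\le\; (d-1)\,\bad,
\]
and since each $\ell_c \ge 2$ this forces $|S| \le (d-1)\bad/2 \le d\bad/2$. Thus the allocation factor is at most $(d-1)^{d\bad/2}$, and multiplying the four contributions gives $|\distr(i)| = O(\bad\,2^{d+\bad}\,(d-1)^{d\bad/2})$; summing over the $n$ nodes yields the stated $\ourcomplex$ bound.

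I expect the counting of allocations to be the main obstacle. The naive bound lets each color in play go to any child, which is exponential in $|\play(i,X)|$ and hence controlled by $\totalbad$ rather than $\bad$; the improvement rests on observing that only multiply-cut colors constitute a real choice and that each such color consumes at least two of the at most $(d-1)\bad$ incidences between children and cut colors, capping the number of free choices at $d\bad/2$. The remaining care lies in verifying that the singly-cut colors and the placement of the distinguished color $b$ do not secretly reintroduce a $\totalbad$-sized factor, and that the per-node transition and cut-set bookkeeping stay within the polynomial prefactors of the claimed bound.
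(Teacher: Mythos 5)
Your proposal is correct and follows essentially the same route as the paper's proof: a dynamic program over the legal color allocations $\distr(i,c,X)$, with the exponential factor controlled by the same double-counting idea (each color requiring a genuine decision is cut by at least two of the at most $d$ edges meeting $i$, capping the free choices at $d\bad/2$), the only structural differences being your bottom-up tabulation versus the paper's on-demand top-down memoization and your slightly tighter count over child edges ($|S| \le (d-1)\bad/2$). One caution: the distinguished color $b$ must be allowed to range over all colors in play, not merely $\cut(i)$ plus ``uncolored'' --- a color split between two children of $i$ but absent elsewhere in the tree lies outside $\cut(i)$ yet must be chosen as $b(i)$ to keep leaves in both subtrees (this is exactly the role of $b(i_1)$ in case (ii) of the proof of Proposition~\ref{prop:implicitConvex}) --- which makes that factor $O(d\bad)$ rather than $O(\bad)$; since your relaxation of $(d-1)^{(d-1)\bad/2}$ to $(d-1)^{d\bad/2}$ leaves slack $(d-1)^{\bad/2}$ that absorbs the extra $d$, the stated bound still stands.
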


\begin{proof}
  For every internal node $i$, define the \emph{question domain} $Q(i)$  to be $C \times 2^{\cut(i)}$.
  An \emph{answer map} at internal node $i$ (resp.\ \emph{answer size map}) is a map $Y \rightarrow \distr(i)$ (resp.\ $Y \rightarrow \NN$) for some $Y \subset Q(i)$.

  We will fill out an answer map $\varphi_i$ and an answer size map $\omega_i$ as needed at every internal node $i$ as follows.
  For a given $i$, say we are given a question $(c,X) \in Q(i)$.
  If $i$ is a leaf, then $\varphi_i(c,X) = X$ and $\omega_i(c,X) = |X|$.
  Otherwise, say there are $\ell$ descendants of $i$.
  For each $(b,\pi) \in \distr(i,c,X)$, find the answers $\varphi_{t(i)_j} (b, \pi_j)$, and their associated $\omega_{t(i)_j}$ by recursion.
  Let
  \[
    \tilde{\omega}_i(b,\pi) = \sum_{1 \leq j \leq \ell} \omega_{t(i)_j} (b, \pi_j).
  \]
  Let $\omega_i(c,X)$ be the maximum value of $\tilde{\omega}_i(b,\pi)$ for $(b,\pi) \in \distr(i,c,X)$, and let $\varphi_i(c,X)$ be the $(b,\pi)$ obtaining this maximum.
  The result of this recursion after starting at the root with every color for $c$ will be a collection of answer maps for every $i$.

  These maps define an implicit subcoloring.
  This can be seen by descending through the tree recursively, using the $\varphi_i$ to get almost partitions from questions and passing the resulting questions onto subtrees.
  Specifically, for question $(c,X)$ at internal node $i$, let $(b(i), \pi(i)) := \varphi_i(c,X)$ then recur by passing question $(b(i)_j, \pi(i)_j)$ to $\varphi_{t(i)_j}$ for every descendant $j$.
  Start at the root, with index $\rho$, pick the color $c_\rho$ maximizing $\omega_\rho(c_\rho, \emptyset)$, and begin the recursion with $(c_\rho,\emptyset)$.

  This subcoloring is maximal.
  Assume any other convex subcoloring; this alternate subcoloring defines a question $(c,X)$ for each nonroot internal node $i$, where $c$ is the color of the edge above internal node $i$ induced by the definition of convexity and $X \subset \cut(i)$ is the set of cut colors that have leaves of that color below $i$.
  The collection of such questions also defines an answer $\varphi_i'(c,X)$ for every such $(c,X)$; by construction, the corresponding subsolution cannot be larger than that for $\varphi_i(c,X)$.

  The complexity estimate for a single internal node is as follows.
  The number of colors in play is bounded above by $d \bad / 2$, as each color in play must be cut in at least two edges.
  Thus, for a given question $(c,X)$, choosing the allocation can take $(d-1)^{d \bad / 2}$ steps for the colors other than $c$, while deciding where $c$ is present can take $2^d$ trials.
  For a given internal node there are at most $\bad \, 2^\bad$ choices of question, giving the bound.
\end{proof}

An upper bound for $\tilde{\omega}$ can be used to construct a branch and bound recursion as follows.

\begin{alg}[Branch and bound recursion to find optimal implicit subcoloring]
  Assume a function $\nu_i(b,\pi) \geq \tilde{\omega}_i(b,\pi)$ for all $(b, \pi) \in \distr(i)$.
  Proceed as in the proof of Theorem~\ref{theorem:full}, with the following modification.
  \label{alg:BB}
  For a given internal node $i$ with $c \in C$ and $X \subset \cut(i)$, find $\varphi_i(c, X)$ as follows:
  \begin{enumerate}
    \item sort the elements $(b, \pi)$ of $\distr(i, c, X)$ in decreasing size with respect to $\nu_i$.
    \item proceed down this ordered list as follows, starting with $q = 0$:
    \begin{enumerate}
      \item compute $\tilde{\omega}_i(b, \pi)$; if $q < \tilde{\omega}_i(b, \pi)$ then set $q = \tilde{\omega}_i(b, \pi)$
      \item call the next item in the ordered list $(b',\pi')$. If $q \geq \nu_i(b',\pi')$ then stop, otherwise recur to (a)
    \end{enumerate}
    \item let $\varphi_i(c, X)$ be the $(b, \pi)$ corresponding to $q$.
  \end{enumerate}

\end{alg}

A simple upper bound is the number of leaves that could be used given the restrictions in $\pi$ but ignoring convexity.
That is, let $\bar{\nu}_i(X)$ be the number of leaves of $T(i)$ with colors in $X$.
Then define $\nu_i(b,\pi) = \sum_j \bar{\nu}_{t(i)_j} (\pi_j)$.
This upper bound gives significant improvement in time used over the algorithm in Theorem~\ref{theorem:full} (Figure~\ref{FIGBBTiming}).

\forarxiv{\FIGBBTiming}

\subsection{Computer implementation}

The original algorithm described in Theorem~\ref{theorem:full} and the branch and bound algorithm in Algorithm~\ref{alg:BB} have been implemented in the \rppr\ binary of the \pplacer\ suite of programs \url{http://matsen.fhcrc.org/pplacer}.
The code is in written in OCaml \cite{ocaml}, an appropriate choice as it has $O(\log n)$ immutable set operations in the standard library.
The input can either be a ``reference package'' containing both taxonomic and phylogenetic information, or simply a phylogenetic tree along with a comma separated value file specifying the color assignments.
Our implementation has been validated using an independent ``brute-force'' implementation in Python; the two codes return identical results on a testing corpus consisting of all colorings on all trees of three to eight leaves with up to six colors.
The algorithm is called via a single command line call, which outputs a list of uncolored taxa for every nonconvex taxonomic rank as well as displaying them on a taxonomically labeled tree by highlighting them in red.

One time saving difference between the implementation and the algorithm from the previous section is that the computer implementation has a notion of ``no color.''
This is motivated by the fact that in the case when $c \not\in X$ and $B(\pi)$ is empty for an internal node $i$, there are a number of colorings of $i$ that will provide a convex subtree.
By collapsing all of the possible colors into a single ``no color'' in this case, we gain some savings in time and memory.

The ``no color'' version of the algorithm can also be used to solve the case of strong convexity described in the Introduction.
Specifically, restricting every internal node to have no color except for the internal nodes of subtrees that consist of entirely of one color leads to an algorithm for strong convexity.
This strong convexity version is available via a command line flag.

The data set used as a test set was a collection of 100 trees built from automatically recruiting sequences via a BLAST search via HMMs built from COG \cite{tatusov2003cog} alignments.
Taxonomic identifiers for the various ranks were found using the \taxtastic\ software, available at \url{http://github.com/fhcrc/taxtastic}.
Each trial was run three times and the results averaged; if any of the runs did not finish in 8 hours, exceeded 16G RAM usage, or encountered a stack overflow, the trial was marked as ``DNF.''
Every trial that completed according to these criteria using the full recursion also completed using the branch-and-bound.
Colored trees with badness strictly greater than 14 were excluded from Figure~\ref{FIGBBTiming}, as were trials that did not complete using either algorithm.
The full recursion and the branch and bound implementations only differ by a switch that controls if the algorithm terminates early.
Trials run on Intel Xeon (X5650) cluster nodes with 48G of RAM.
This test data set is available upon request.

\section{Taxonomic rooting}

Researchers generally like to root phylogenetic trees in a way that the progression along edges from the root to the leaves is one of descent.
There are a number of ways of achieving this, from using outgroups to using non-reversible models of mutation \cite{yap2005rooting}.
However, there has been surprisingly little work on one of the most commonly used informal means of rerooting, which is using taxonomic classifications.
By this, we mean looking for a location in the tree such that the leaf sets of the subtrees each have a single taxonomic classification at the highest taxonomic rank that contains multiple taxonomic identifiers.
Here we formalize this process and describe algorithms for finding the taxonomic root or roots.

\begin{defn}
\label{defn:rkUnion}
  A \emph{rank function} for a set $U$ is a map $\rk: \powset{U} \rightarrow \NN$ such that
  \[
  \rk(A \cup B) \geq \max(\rk(A),\rk(B))
  \]
  for all $A$ and $B$ in $\powset{U}$.
\end{defn}
It follows immediately that $\rk(A) \subset \rk(B)$ when $A \subset B \in \powset{U}$.
By an abuse of notation, we also let $\rk(T)$ signify $\rk(L(T))$ for a (sub)tree $T$ with leaf set in the domain of the rank function.
From a taxonomic perspective, $\rk(U)$ will represent the rank of the most specific taxonomic classification containing all of the taxonomic labels in $U$.
For example, the rank of a set consisting of a genus-level taxonomic assignment and an order-level one is the rank of order.
For this section, a \emph{taxonomically labeled phylogenetic tree} is one for which we have a rank function on the leaves.

Given $x$ a node of $T$, let $\treecut(x;T)$ represent the trees obtained by deleting $x$ from $T$.

\begin{defn}
  Define the \emph{subrank} $\subrk(x;T)$ to be $\max_{S \in \treecut(x;T)} \rk(S)$, the maximum rank of the subtrees of $T$ when rooted at $x$.
  We will say $x$ is a \emph{delicate taxonomic root} of $T$ if
  \[
    \subrk(x;T) = \min_{y \in N(T)} \subrk(y;T).
  \]
\end{defn}

This definition formalizes an intuitive definition of taxonomic root.
For example, imagine that we have a tree with the three domains of cellular life in three distinct subtrees: Bacteria, Archaea, and Eucaryota; call the internal node that sits between these subtrees $x$.
The subrank of $x$ is domain.
Any other internal node will contain some of each of the domains, and thus will have rank strictly higher than domain.
In this case, $x$ is the unique taxonomic root.

However, if the tree is not convex at the subrank of the delicate taxonomic root then every internal node will be a delicate taxonomic root; thus the ``delicate'' terminology.
Indeed, assume an internal node $y$ and $A, B \in \treecut(y;T)$ such that ${a_1, a_2} \subset L(A)$, ${b_1, b_2} \subset L(B)$, and $\subrk({a_1, a_2}) = \subrk({b_1, b_2}) = \subrk(T)$.
Then for any rooting, there must exist a subtree containing either $\{a_1, a_2\}$ or $\{b_1, b_2\}$, and the subrank must be equal to that of $T$.

We now develop a more robust definition of taxonomic root, which will require several definitions.
Think of edges of the tree as being unordered pairs $\{x,y\}$ of nodes.
\begin{defn}
  An \emph{arrow} on an edge $\{x,y\}$ is an ordered pair of nodes $(x,y)$.
  The first node of the pair is called the origin of the arrow, and the second is called the direction.
\end{defn}

\begin{defn}
  An \emph{arrow tree} $(T,\arrowset)$ is an ordered pair consisting of a tree $T$ and a set of arrows $\arrowset$ on the edges of $T$.
  A \emph{complete arrow tree} is an arrow tree such that for every node $x$ of the tree there is some arrow in $\arrowset$ with origin $x$.
\end{defn}

Note that $(x,y)$ and $(y,x)$ may both be part of an arrow set for a tree with an edge $\{x,y\}$.
We will use ``pointing towards'' and ``pointing away'' in their usual senses as they relate to arrows in the real world.

\begin{defn}
  The \emph{induced arrow tree} $(T,\arrowset)$ for a tree $T$ and a rank function $\rk$ is a complete arrow tree defined as follows.
  For a given internal node $x \in N(T)$, say $\{S_1,\cdots,S_n\} = \treecut(x;T)$ and let $r_i = \rk(S_i)$ for $1 \leq i \leq n$.
  Assume without loss of generality that $r_1 \leq r_2 \leq \cdots \leq r_n$.
  There is some minimal $1 \leq j \leq n$ such that $r_j = \cdots = r_n$.
  Let $\calA_x$ be
  \[
    \{ (x,y) \, | \, y \ \text{is the root of one of} \ S_j,\cdots,S_n \}.
  \]
  Then $\calA$ is the union of the $\calA_x$ for all nodes $x$.
\end{defn}
Intuitively, induced arrows point towards potential taxonomic roots.

\FIGBackarrow

\begin{lemma}
  \label{lemma:backarrow}
  Say $(T,\arrowset)$ is an induced arrow tree for a rank function $\rk$,
  and that $\{x,y\}$ and $\{y,z\}$ are adjacent edges of $T$.
  If $(y,z) \in \arrowset$ then $(x,y) \in \arrowset$.
\end{lemma}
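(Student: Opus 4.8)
The plan is to work directly from the definition of the induced arrow tree, translating both the hypothesis $(y,z) \in \arrowset$ and the desired conclusion $(x,y) \in \arrowset$ into statements about which subtree at a node attains the maximal rank, and then to chain rank inequalities using the monotonicity of $\rk$ (the consequence $\rk(A) \leq \rk(B)$ for $A \subset B$ recorded immediately after Definition~\ref{defn:rkUnion}).

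First I would fix notation by cutting the edge $\{x,y\}$: this splits $T$ into a piece $T_x$ containing $x$ and a piece $T_y$ containing $y$ (and hence $z$, since $\{y,z\}$ lies on the $y$-side, as in Figure~\ref{FIGBackarrow}). The key identifications are that $T_x$ is exactly the subtree rooted at $x$ appearing in $\treecut(y;T)$, while $T_y$ is exactly the subtree rooted at $y$ appearing in $\treecut(x;T)$. I would also name the remaining members of $\treecut(x;T)$---those hanging off the neighbors of $x$ other than $y$---and observe that each such member $U$ has leaf set contained in $L(T_x)$, while the subtree $S_z$ rooted at $z$ in $\treecut(y;T)$ has leaf set contained in $L(T_y)$, since neither $U$ nor $S_z$ can reach across the cut edge.

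Next I would unpack the hypothesis. By the definition of the induced arrow tree at $y$, $(y,z) \in \arrowset$ says that $S_z$ attains the maximal rank among the members of $\treecut(y;T)$; in particular $\rk(T_x) \leq \rk(S_z)$, since $T_x$ is itself one of those members. Combining this with the two containments above and monotonicity yields the chain
\[
  \rk(U) \leq \rk(T_x) \leq \rk(S_z) \leq \rk(T_y)
\]
for every subtree $U \in \treecut(x;T)$ other than $T_y$. Hence $T_y$ attains the maximal rank among the members of $\treecut(x;T)$, which is exactly the condition placing the arrow $(x,y)$ into $\calA_x \subset \arrowset$.

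I expect the only real care to be bookkeeping: correctly matching each ``subtree at a node'' to its location relative to the cut edge $\{x,y\}$ and confirming the two leaf-set containments $L(S_z) \subset L(T_y)$ and $L(U) \subset L(T_x)$. Once those are in place the result is an immediate two-step application of monotonicity, with no appeal to the full defining inequality of the rank function beyond its monotonicity corollary.
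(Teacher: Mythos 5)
Your proof is correct and is essentially the paper's own argument: your chain $\rk(U) \leq \rk(T_x) \leq \rk(S_z) \leq \rk(T_y)$ is exactly the paper's chain $\rk(R_i) \leq \rk(R_1 \cup \cdots \cup R_k) \leq \rk(U) \leq \rk(S_1 \cup \cdots \cup S_\ell \cup U)$ under the renaming of your generic $U \in \treecut(x;T)$ to the paper's $R_i$ and your $S_z$ to the paper's $U$, with the same two appeals to monotonicity of $\rk$. The only cosmetic difference is that the paper dispatches the case where $x$ is a leaf in a separate opening sentence, whereas your argument absorbs it as the degenerate case in which $\treecut(x;T)$ has no members other than $T_y$.
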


\begin{proof}
  When $x$ is a leaf, $(x,y) \in \arrowset$ is automatic, thus assume it is not.
  Using terminology from Figure~\ref{FIGBackarrow}, because $(y,z) \in \arrowset$,
  \[
    \rk(R_1 \cup \cdots \cup R_k) \leq \rk(U).
  \]
  This implies that
  \[
    \rk(R_i) \leq \rk(U) \leq \rk(S_1 \cup \cdots \cup S_\ell \cup U)
  \]
  and thus that $(x,y) \in \arrowset$.
\end{proof}

Induction on the edges of a path shows the following:

\begin{cor}
  \label{cor:behind}
  \pushQED{\qed}
  Say $(T,\arrowset)$ is an induced arrow tree, and that
  $\{u,v\}$ and $\{x,y\}$ are edges of $T$ such that the path from $u$ to $y$ contains both $v$ and $x$.
  If $(x,y) \in \arrowset$ then $(u,v) \in \arrowset$.
  \popQED
\end{cor}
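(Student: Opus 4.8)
The plan is to prove Corollary~\ref{cor:behind} by induction on the length of the path from $u$ to $y$, using Lemma~\ref{lemma:backarrow} as the single-step tool. The statement to establish is that whenever $(x,y) \in \arrowset$ and the path from $u$ to $y$ passes through $v$ and then $x$ (with $\{u,v\}$ and $\{x,y\}$ edges of $T$), we may conclude $(u,v) \in \arrowset$. Write the path from $u$ to $y$ as a sequence of nodes $u = w_0, w_1, \ldots, w_m = y$, so that $\{u,v\} = \{w_0, w_1\}$ (hence $v = w_1$) and $\{x,y\} = \{w_{m-1}, w_m\}$ (hence $x = w_{m-1}$). Each $\{w_{i-1}, w_i\}$ is an edge of $T$, and consecutive edges $\{w_{i-1}, w_i\}$ and $\{w_i, w_{i+1}\}$ are adjacent in the sense required by Lemma~\ref{lemma:backarrow}.

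First I would handle the base case $m = 1$, where the two edges coincide; then $u = x$, $v = y$, and the hypothesis $(x,y) \in \arrowset$ is literally the conclusion $(u,v) \in \arrowset$. (If the intended base case is $m=2$, then $\{u,v\}$ and $\{x,y\}$ are directly adjacent, sharing the node $v = x$, and Lemma~\ref{lemma:backarrow} applies verbatim to give $(u,v) \in \arrowset$ from $(x,y) = (v,y) \in \arrowset$.) For the inductive step, suppose the claim holds for all paths of length less than $m$. Given the arrow $(x,y) = (w_{m-1}, w_m) \in \arrowset$, apply Lemma~\ref{lemma:backarrow} to the adjacent edges $\{w_{m-2}, w_{m-1}\}$ and $\{w_{m-1}, w_m\}$: from $(w_{m-1}, w_m) \in \arrowset$ we obtain $(w_{m-2}, w_{m-1}) \in \arrowset$. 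Now the path from $u = w_0$ to $w_{m-1}$ has length $m-1$, passes through $w_1 = v$ and through $w_{m-2}$, and carries the arrow $(w_{m-2}, w_{m-1}) \in \arrowset$ on its terminal edge; the induction hypothesis then yields $(u,v) \in \arrowset$, completing the step.

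The key structural point, which I would state explicitly, is that Lemma~\ref{lemma:backarrow} propagates an arrow one edge \emph{backward} along the direction it points, and the corollary simply iterates this backward propagation along the whole path. I do not expect a genuine obstacle here, since the excerpt already signals the proof as "Induction on the edges of a path." The one place requiring minor care is verifying that the adjacency hypothesis of Lemma~\ref{lemma:backarrow} is met at each step, i.e.\ that each triple $(w_{i-2}, w_{i-1}, w_i)$ consists of two genuinely adjacent edges $\{w_{i-2}, w_{i-1}\}$ and $\{w_{i-1}, w_i\}$ sharing the middle node; this follows because $w_0, \ldots, w_m$ is a path in a tree, so its consecutive edges are distinct and share exactly their common endpoint. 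With that observation the induction is routine.
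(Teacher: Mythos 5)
Your proof is correct and follows exactly the route the paper intends: the paper dispatches Corollary~\ref{cor:behind} with the single remark ``Induction on the edges of a path shows the following,'' and your write-up is precisely that induction, with Lemma~\ref{lemma:backarrow} applied once per step to propagate the arrow backward along the path. Your explicit handling of the base cases and the adjacency check at each step simply fills in details the paper leaves implicit.
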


Informally, this corollary says that any time there is an arrow on edge $e_2$ pointing away from edge $e_1$, that there must be an arrow on $e_1$ pointing towards $e_2$.

\begin{defn}
  A multi arrow node (MAN) for a taxonomically labeled tree is a node $x$ such that there are two or more arrows in the induced arrow tree with $x$ as an origin.
\end{defn}

\begin{prop}
  \label{prop:man-point}
  Say $(T,\arrowset)$ is an induced arrow tree.
  If node $x$ is a MAN then for any node $y$ there must be an arrow in $\arrowset$ with origin $y$ pointing towards $x$.
\end{prop}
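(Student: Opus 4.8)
The plan is to exploit the backward-propagation of arrows captured by Corollary~\ref{cor:behind}: once we know that some arrow points \emph{away} from $y$ on the far side of a node, every edge between $y$ and that node must carry an arrow pointing toward it. The only new ingredient we must supply is a single arrow, guaranteed by the MAN hypothesis, that points into a component of $T$ lying on the opposite side of $x$ from $y$.

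First I would unpack what it means for $x$ to be a MAN. By definition there are at least two arrows of $\arrowset$ with origin $x$. Since an arrow with origin $x$ is determined by the edge it lies on, these arrows point to two distinct neighbors $a \neq b$ of $x$, and hence into two distinct components of $\treecut(x;T)$ (the case $y = x$ carries no content, so I treat $y \neq x$ throughout).

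Next, fix an arbitrary node $y \neq x$ and let $w$ be the neighbor of $y$ on the unique path from $y$ to $x$; the goal is to produce the arrow $(y,w)$, which is precisely the arrow from $y$ pointing toward $x$. Let $n$ be the neighbor of $x$ on this same path, so that $y$ lies in the component of $\treecut(x;T)$ whose root is $n$. Since $a \neq b$, at least one of them differs from $n$; call it $c$. Then $c$ lies in a component of $\treecut(x;T)$ not containing $y$, so the path from $y$ to $c$ must pass through $x$ and has the form $y, w, \ldots, x, c$, with $(x,c) \in \arrowset$ by construction.

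Finally I would apply Corollary~\ref{cor:behind} to the edges $\{y,w\}$ and $\{x,c\}$: the path from $y$ to $c$ contains $w$ and then $x$, and $(x,c) \in \arrowset$, so the corollary yields $(y,w) \in \arrowset$, as required. The one point needing care is the degenerate case in which $y$ is adjacent to $x$, so that $w = x$ and the two edges $\{y,x\}$ and $\{x,c\}$ merely share the node $x$; here the corollary collapses to a single application of Lemma~\ref{lemma:backarrow}, which gives $(y,x) \in \arrowset$ directly. I expect the main (and essentially the only) subtlety to be the selection of $c$: it is exactly the MAN hypothesis — two arrows into two distinct components — that guarantees an outward-pointing arrow on the side of $x$ opposite to $y$, no matter where $y$ sits.
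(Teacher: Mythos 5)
Your proof is correct and follows essentially the same route as the paper: the MAN hypothesis yields an arrow with origin $x$ pointing away from $y$, and Corollary~\ref{cor:behind} then propagates an arrow back from $y$ toward $x$. Your extra care in selecting the neighbor $c$ and in handling the adjacent case via Lemma~\ref{lemma:backarrow} merely makes explicit what the paper's two-line proof leaves implicit.
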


\begin{proof}
  Since $x$ is a MAN, there must be at least one arrow in $\arrowset$ with origin $x$ pointing away from $y$.
  This implies the proposition by Corollary~\ref{cor:behind}.
\end{proof}

Now imagine that $x$ and $z$ are two MANs, and $y$ is on the path between $x$ and $z$.
By the above proposition, there must be arrows with origin $y$ pointing towards both $x$ and $z$, showing that $y$ will be a MAN.
Thus:

\begin{prop}
  \pushQED{\qed}
  MANs form a connected set.
  \popQED
\end{prop}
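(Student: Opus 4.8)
The plan is to show that the set of MANs forms a subtree of $T$: for any two MANs $x$ and $z$, every node lying on the unique path between $x$ and $z$ in $T$ is itself a MAN. Since in a tree a set of nodes is connected precisely when it contains the path between any pair of its members, establishing this closure property suffices to prove connectivity.

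First I would fix two MANs $x$ and $z$ and let $y$ be an arbitrary node on the path between them. If $y$ equals $x$ or $z$ there is nothing to prove, so assume $y$ is strictly interior to the path. Applying Proposition~\ref{prop:man-point} with the MAN $x$ yields an arrow in $\arrowset$ with origin $y$ pointing towards $x$; applying it a second time with the MAN $z$ yields an arrow with origin $y$ pointing towards $z$. Thus $y$ is the origin of two arrows, provided these two arrows are genuinely distinct.

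The key remaining step is exactly this distinctness check. Because $y$ lies strictly between $x$ and $z$ on a path in a tree, the unique edge incident to $y$ on the $x$-side of the path differs from the unique edge incident to $y$ on the $z$-side; the arrow towards $x$ lies on the former edge and the arrow towards $z$ on the latter, so the two arrows cannot coincide. Hence $y$ is the origin of at least two distinct arrows and is a MAN by definition.

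I expect this distinctness verification to be the only real obstacle, and it is a mild one: it reduces to the fact that a path in a tree does not revisit vertices, so the two ``pointing towards'' directions at an interior node use different incident edges. Everything else is a direct double application of Proposition~\ref{prop:man-point} (which itself rests on Corollary~\ref{cor:behind}). Having shown that every node between two MANs is a MAN, I would conclude that the MAN set contains the path between any two of its elements and is therefore a connected subtree of $T$.
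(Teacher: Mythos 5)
Your proof is correct and takes essentially the same route as the paper: both arguments take a node $y$ on the path between two MANs $x$ and $z$ and apply Proposition~\ref{prop:man-point} twice to obtain arrows with origin $y$ pointing towards each of $x$ and $z$, concluding that $y$ is a MAN. Your explicit verification that the two arrows are distinct (because $y$ lies strictly between $x$ and $z$, so the incident edges towards each differ) is a detail the paper leaves implicit, and it is sound.
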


\begin{defn}
  An edge $\{x,y\}$ is a \emph{bi-arrow edge} of an arrow set $\arrowset$ if $(x,y)$ and $(y,x)$ are in $\arrowset$.
\end{defn}

\begin{prop}
  If the set of MANs is empty, then there is exactly one bi-arrow edge.
\end{prop}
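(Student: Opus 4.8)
The plan is to prove existence and uniqueness separately: a counting argument handles the existence of a bi-arrow edge, while Corollary~\ref{cor:behind} (together with Lemma~\ref{lemma:backarrow} for boundary cases) manufactures a MAN whenever there are two, giving uniqueness.

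For \textbf{existence}, I would first observe that the hypothesis pins down the out-degree of every node. Since $(T,\arrowset)$ is a complete arrow tree, every node of $T$ is the origin of at least one arrow; since no node is a MAN, no node is the origin of two or more. Hence every node is the origin of exactly one arrow, and because distinct nodes yield arrows with distinct origins, $|\arrowset| = |N(T)|$. A tree on $|N(T)|$ nodes has exactly $|N(T)| - 1$ edges, and a given edge $\{x,y\}$ can host at most the two arrows $(x,y)$ and $(y,x)$. If there were no bi-arrow edge, each edge would host at most one arrow, forcing $|\arrowset| \le |N(T)| - 1 < |N(T)|$, a contradiction. Therefore at least one bi-arrow edge exists (the degenerate single-node tree is vacuous, as it carries no arrows).

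For \textbf{uniqueness}, I would argue by contradiction: suppose $e_1 = \{x,y\}$ and $e_2 = \{u,v\}$ are two distinct bi-arrow edges, and show this forces some node to be a MAN. Using tree geometry, orient the edges so that the unique path from $x$ to $v$ reads $x, y, \dots, u, v$; that is, $y$ and $u$ are the endpoints of $e_1$ and $e_2$ nearest each other. If $y = u$, the two edges share a vertex, and the bi-arrow property of both puts the distinct arrows $(y,x)$ and $(y,v)$ into $\arrowset$, so $y$ is immediately a MAN. Otherwise $y \neq u$; let $w$ be the neighbor of $y$ on the path toward $u$, so $w \neq x$. From $e_1$ being a bi-arrow edge we have $(y,x) \in \arrowset$, and from $e_2$ we have $(u,v) \in \arrowset$, an arrow pointing away from $e_1$. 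Since the path from $y$ to $v$ contains $w$ and then $u$, Corollary~\ref{cor:behind} applied to $(u,v)$ yields $(y,w) \in \arrowset$; in the sub-case $w = u$ (where $y$ and $u$ are adjacent) I would instead invoke Lemma~\ref{lemma:backarrow} directly on the adjacent edges $\{y,u\}$ and $\{u,v\}$ to draw the same conclusion. Now $y$ is the origin of the two distinct arrows $(y,x)$ and $(y,w)$, hence a MAN, contradicting the hypothesis that the set of MANs is empty. Combined with existence, this gives exactly one bi-arrow edge.

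The main obstacle I expect is the geometric bookkeeping in the uniqueness step: correctly selecting the near endpoints $y$ and $u$ and tracking the direction of the arrow forced at $y$, together with the boundary cases where the two bi-arrow edges share a vertex or are adjacent. These must be peeled off and dispatched by Lemma~\ref{lemma:backarrow} before the clean Corollary~\ref{cor:behind} application applies, but the conceptual content is simply that a far-away bi-arrow edge forces an inner endpoint of $e_1$ to carry a forward-pointing arrow in addition to its backward one.
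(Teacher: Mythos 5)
Your proof is correct, and its uniqueness half is essentially the paper's argument: the paper also kills two bi-arrow edges by using Corollary~\ref{cor:behind} to produce a MAN between them, though it does so in one sentence where you carefully peel off the shared-vertex and adjacent-edge cases (your use of Lemma~\ref{lemma:backarrow} when $w=u$ is exactly right, since the corollary is the lemma propagated along a path). Your existence half, however, takes a genuinely different route. The paper follows arrows: with no MANs the arrow-following walk from a leaf is deterministic, it cannot terminate at another leaf, so in a finite tree it must backtrack, and a backtrack is precisely a bi-arrow edge. You instead count: completeness plus no MANs gives exactly one arrow per node, so $|\arrowset| = |N(T)|$, while a tree has only $|N(T)|-1$ edges, each hosting at most one arrow if no edge is bi-arrow --- pigeonhole. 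Your counting argument is cleaner and sidesteps the slightly informal ``must backtrack somewhere'' step in the paper; what the paper's walk argument buys in exchange is that it is effectively algorithmic --- following arrows from any leaf locates the bi-arrow edge --- which is what underlies the paper's later observation that a taxonomic root can be found in at most $\diam(T)$ steps, whereas your pigeonhole proof is purely existential. Your handling of the one-node tree is a harmless aside (the proposition implicitly assumes at least one edge; note that a single-node tree cannot even carry a complete arrow set).
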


\begin{proof}
  First note that there cannot be two or more bi-arrow edges when the set of MANs is empty; in that case by Corollary~\ref{cor:behind} there would have to be a MAN between them.
  Now assume there are no bi-arrow edges.
  Since the set of MANs is empty, then for every leaf of the tree the sequence of nodes determined by following arrows is well defined.
  Note that the arrow on every leaf is pointing into the interior of the tree, and thus the sequence of nodes starting from an arbitrary leaf cannot hit another leaf.
  Therefore the sequence of nodes must backtrack somewhere, contradicting that there are no bi-arrow edges.
\end{proof}

\begin{defn}
  Assume a taxonomically labeled tree $T$.
  If there is at least one MAN then define the set of taxonomic roots to be the set of MANs.
  Otherwise define it to be the set of nodes of the bi-arrow edge.
\end{defn}

Let $\diam(T)$ be the node-diameter of $T$, i.e.\ the number of steps from edge to edge required to traverse the tree.
Because every arrow with a non-root origin points in the direction of the taxonomic roots:

\begin{prop}
  \pushQED{\qed}
  A taxonomic root for a tree $T$ with $n$ leaves can be found in at most $\diam(T)$ steps.
  \popQED
\end{prop}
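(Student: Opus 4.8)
The plan is to exhibit a greedy arrow-following walk and to bound its length by the node-diameter. Concretely, I would start at an arbitrary node $v_0$ and repeatedly move along an arrow whose origin is the current node; because the induced arrow tree is complete, such an arrow always exists, so the walk never stalls. The claim to establish is that this walk reaches a taxonomic root after a number of steps equal to the distance from $v_0$ to the nearest taxonomic root (possibly zero), which is at most $\diam(T)$ by the definition of the diameter.

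The heart of the argument is the parenthetical remark preceding the statement: every arrow whose origin is \emph{not} itself a taxonomic root points strictly closer to the set of taxonomic roots. I would split this into the two cases of the definition of taxonomic root. When the set of MANs is nonempty, a node $y$ that is not a taxonomic root is not a MAN and hence has exactly one outgoing arrow; by Proposition~\ref{prop:man-point} there is an arrow with origin $y$ pointing toward each MAN, so this unique arrow must point toward the connected set of MANs and thus decreases the distance to the nearest MAN by one. When the set of MANs is empty, let $\{p,q\}$ be the unique bi-arrow edge; deleting it splits $T$ into a $p$-side and a $q$-side. For $y$ on the $p$-side with $y \neq p$, I would apply Corollary~\ref{cor:behind} to the arrow $(p,q)$ and the edge leaving $y$ toward $p$ to conclude that $y$'s unique arrow points one step closer to $p$, and symmetrically on the $q$-side. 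In either case each step of the walk strictly decreases the distance to the target root set, so the walk must terminate at a taxonomic root.

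Combining these, the number of steps is exactly the distance from $v_0$ to the nearest taxonomic root, which does not exceed $\diam(T)$; starting the walk at any node therefore locates a taxonomic root within $\diam(T)$ steps.

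I expect the main obstacle to be the bookkeeping in the MAN-empty case: one must verify the orientation claim for nodes on each side of the bi-arrow edge and confirm that the deterministic arrow-following walk cannot cycle before reaching the edge (equivalently, that its only possible cycle is the oscillation across the bi-arrow edge). Once the directionality invariant is secured in both cases, the diameter bound is immediate, so this monotone-distance invariant is really the crux.
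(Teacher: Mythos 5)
Your proposal is correct and matches the paper's intended argument: the paper states the result as immediate from the observation that ``every arrow with a non-root origin points in the direction of the taxonomic roots,'' and your arrow-following walk with the monotone-distance invariant is exactly that argument, spelled out. Your two-case verification of the invariant (via Proposition~\ref{prop:man-point} when MANs exist, and via Corollary~\ref{cor:behind} applied to the bi-arrow edge otherwise) correctly fills in the details the paper leaves unstated.
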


\subsection{Computer implementation}
Taxonomic rerooting has been implemented in the \rppr\ binary of the \pplacer\ suite of programs \url{http://matsen.fhcrc.org/pplacer}.
However, rather than finding all possible taxonomic roots as described above, the program reports one of the roots after applying the maximal subcoloring algorithm as described in the previous section to the highest multiply occupied taxonomic rank.
Such a root is the closest approximation to the one ``best'' taxonomic root in the presence of nonconvexity.

\section{Conclusions and future work}

We have formalized the question of describing the discordance of a phylogenetic tree with its taxonomic classifications in terms of a convex subcoloring problem previously described in the literature.
This coloring problem has some elegant solutions for the general case, but the parameter regime of interest here consists of trees of small degree and local nonconvexity.
These considerations motivate a solution that solves a given recursion for as few ``questions'' as possible.
The first component of this is to restrict attention to cut colors, resulting in a smaller base for the exponential complexity (Figure~\ref{FIGLocalBadness})
The second is a branch and bound algorithm that gives a significant improvement in runtime compared to the algorithm in Theorem~\ref{theorem:full} (Figure~\ref{FIGBBTiming}).
To enable this the $\varphi_i$ are only built up ``upon demand,'' that is, when a given question is asked.
The implementation described here is the first of which we are aware, and certainly the first that conveniently integrates with taxonomic annotation.

We also develop the first formalism for taxonomic rooting of phylogenetic trees, show that the original definition is useless in the presence of nonconvexity, and develop a more useful definition.
This version can be found in time linear in the diameter in the tree.

We are currently developing a computational pipeline to reclassify sequences in public databases based on these algorithms.

We are also using these algorithms together to develop a collection of automatically curated ``reference packages'' that bring together taxonomic and phylogenetic for the purposes of environmental short read classification, visualization, and comparison.

\section{Acknowledgements}

This work was motivated by joint work with David Fredricks, Noah Hoffman, Martin Morgan, and Sujatha Srinivasan at the Fred Hutchinson Cancer Research Center.
We are especially grateful to Noah Hoffman for providing feedback on early results of the algorithm, to Shlomo Moran and Sagi Snir for help understanding their algorithm, and to Robin Kodner for allowing the COG trees to be used as test data for our algorithm.
Both authors were supported by NIH grant HG005966-01.

\bibliography{taxiphy}
\bibliographystyle{plain}

\end{document}